\documentclass[12pt,letterpaper]{article}
\usepackage{amsmath,amssymb,amsfonts,amsthm,dsfont,mathtools,lipsum,etoolbox,mathabx}
\usepackage{booktabs}
\usepackage[flushleft]{threeparttable}
\usepackage{multirow}
\usepackage{setspace}
\usepackage{color}
\usepackage{endnotes}
\usepackage{enumerate}
\usepackage{float}
\usepackage{makecell}
\usepackage{afterpage}
\usepackage[bottom]{footmisc}
\usepackage[pdftex]{graphicx} 
\usepackage{epstopdf}
\usepackage{siunitx}
\usepackage{tikz}
\usepackage{indentfirst}
\usepackage{latexsym}
\usepackage{lscape}
\usepackage{parskip}
\usepackage{rotating}
\usepackage{caption}
\usepackage[normalem]{ulem}
\usepackage{pgfplots}
\usepackage{bm}
\usepackage{verbatim}
\usepackage{mathdots}
\usepackage{soul}
\usepackage{mwe}
\usepackage{arydshln}
\usepackage{xcolor}
\usepackage{subcaption}
\usepackage[scaled=0.85]{beramono}
\usepackage{placeins}
\usepackage[colorlinks=true,allcolors=blue]{hyperref}
\usepackage{cleveref}
\usepackage{chngcntr}
\usepackage{scalerel,stackengine}

\counterwithin{figure}{section}
\counterwithin{table}{section} 
\Crefname{subsection}{Subsection}{Subsections}

\usepackage[left=1.0in,right=1.0in,top=1.0in,bottom=1.0in]{geometry}
\usepackage[shrink=0,letterspace=0]{microtype}
\usepackage[authoryear, round,semicolon,sort&compress]{natbib}
\usepackage{mathptmx}
\usepackage[T1]{fontenc}

\allowdisplaybreaks[4]

\newtheorem{lma}{Lemma}[section]

\definecolor{mygreen}{rgb}{0.2, 0.7, 0.1}

\makeatletter
\newcommand{\leqnomode}{\tagsleft@true}
\newcommand{\reqnomode}{\tagsleft@false}
\makeatother

\newcommand\redsout{\bgroup\markoverwith{\textcolor{red}{\rule[0.5ex]{1pt}{1.4pt}}}\ULon}
\newcommand\bluesout{\bgroup\markoverwith{\textcolor{blue}{\rule[0.5ex]{1pt}{1.4pt}}}\ULon}

\newcommand{\Var}{\text{Var}}

\makeatletter
\newcommand{\VBig}{\bBigg@{2.7}}
\newcommand{\Vast}{\bBigg@{3.6}}
\newcommand{\VVast}{\bBigg@{5.5}}
\newcommand{\grp}{\mathrm{g}}   
\newcommand{\obs}{\mathrm{o}}   
\newcommand{\E}{\mathbb{E}}

\stackMath
\newcommand\reallywidehat[1]{%
	\savestack{\tmpbox}{\stretchto{%
			\scaleto{%
				\scalerel*[\widthof{\ensuremath{#1}}]{\kern-.6pt\bigwedge\kern-.6pt}%
				{\rule[-\textheight/2]{1ex}{\textheight}}
			}{\textheight}%
		}{0.5ex}}%
	\stackon[1pt]{#1}{\tmpbox}%
}

\makeatother

\newcommand{\zerodisplayskips}{%
	\setlength{\abovedisplayskip}{0.25cm}
	\setlength{\belowdisplayskip}{0.25cm}
	\setlength{\abovedisplayshortskip}{0.25cm}
	\setlength{\belowdisplayshortskip}{0.25cm}}
\appto{\normalsize}{\zerodisplayskips}
\appto{\small}{\zerodisplayskips}
\appto{\footnotesize}{\zerodisplayskips}

\newcommand{\eqfontsize}{\small}

\BeforeBeginEnvironment{equation*}{\begingroup\eqfontsize}
\AfterEndEnvironment{equation*}{\endgroup}

\BeforeBeginEnvironment{align*}{\begingroup\eqfontsize}
\AfterEndEnvironment{align*}{\endgroup}

\BeforeBeginEnvironment{equation}{\begingroup\eqfontsize}
\AfterEndEnvironment{equation}{\endgroup}

\BeforeBeginEnvironment{align}{\begingroup\eqfontsize}
\AfterEndEnvironment{align}{\endgroup}

\begin{document}	
	
\title{Unbiased Estimation of Central Moments in Unbalanced Two- and Three-Level Models}
\author{Dan Ben-Moshe\thanks{Department of Economics, Ben-Gurion University of the Negev. Email: dbmster@gmail.com}  
	\and David Genesove\thanks{Department of Economics, The Hebrew University of Jerusalem. Email: david.genesove@mail.huji.ac.il}}

\date{\today}

\maketitle
\begin{abstract}
	
This paper derives closed-form unbiased estimators of central moments in multilevel random-effects models with unbalanced group sizes. In a two-level model, we provide unbiased estimators for the second, third, and fourth central moments under both group-level and observation-level averaging. In a three-level model, we provide unbiased estimators for the second and third central moments.
	
	\noindent \textbf{Keywords:} Multilevel (hierarchical) models; unbalanced panels; central moments; group-level averaging; observation-level averaging.\\
	\noindent \textbf{JEL Classification:} C13, C23
	
\end{abstract}
	
\section{Introduction}

Multilevel (hierarchical) linear models are widely used in empirical economics, appearing in student--teacher--school assessments and neighborhood--city--region housing markets. While estimation of variance components in multilevel random-effects models is well established, including for unbalanced designs \citep[e.g.,][]{SearleCasellaMcCulloch1992VarComp,PattersonThompson1971REML}, there appear to be few closed-form results for finite-sample unbiased estimation of third and fourth central moments. Early ANOVA and cumulant work derives unbiased estimators for certain higher-order component cumulants in variance-component settings, often in balanced designs \citep{tukey1957third}. More recent work studies third and fourth moment estimation focusing on consistency and asymptotic efficiency rather than explicit finite-sample unbiased within--between decompositions under alternative averaging schemes \citep{wu2010orthogonality,wustutezhu2012efficient}.

In many economic applications, the third and fourth central moments of latent components (and their standardized counterparts, skewness and kurtosis) contain structural information about efficiency, selection, or risk. For example, work on earnings dynamics finds that income shocks exhibit substantial left-skewness and excess kurtosis \citep{guvenen2014nature}. More broadly, in consumption/income risk problems and insurance markets, higher moments of persistent and transitory shocks affect welfare, precautionary saving, and pricing of downside risk \citep{Kimball1990PrecautionarySaving,MenezesGeissTressler1980DownsideRisk}. In firm and establishment data, the distribution of growth and productivity shocks is often non-Gaussian, and higher moments are informative about investment, default risk, and aggregation \citep{BottazziSecchi2003TentShaped,Gabaix2011Granular}. The estimators in this paper are used in \citet{BenMosheGenesoveFrDe} in a two-level model and \citet{BenMosheGenesoveRegulation} in a three-level model.

This paper derives closed-form unbiased estimators of central moments in unbalanced two- and three-level models under two averaging schemes: group-level averaging (weighting groups equally) and observation-level averaging (weighting lowest-level units equally). In unbalanced panels, failing to account for unequal subgroup sizes can bias moment estimators for the latent components. The appropriate weighting depends on the object of interest: group-level averaging is natural when each group is meant to count equally, while observation-level averaging is natural when each lowest-level unit is meant to count equally.

Section~\ref{se:2level} presents the two-level results (second--fourth moments), Section~\ref{se:3level} presents the three-level results (second--third moments), and the Appendix provides the algebraic derivations.

\section{Two-level model} \label{se:2level}

Suppose we observe $y_{ij}$ in the two-level model
\[
y_{ij}=u_i+v_{ij},\quad i=1,\dots,n,\quad j=1,\dots,J_i,
\]
where $\{u_i\}_{i}$ and $\{v_{ij}\}_{i,j}$ are unobserved, mutually independent, and within each level i.i.d., with
$\E[u]=\E[v]=0$. We assume $n\geq 3$ and $J_i\geq 3$ for all~$i$. 
 Let $\mu_{ku}=\E[u^k]$ and $\mu_{kv}=\E[v^k]$ for $k=2,3,4$, and 
define sample means (superscripts $\grp$ and $\obs$ denote group-level and observation-level averaging) as follows,
\[
\bar y_i=\frac{1}{J_i}\sum_{j}y_{ij},\qquad
\bar y^{\grp}=\frac{1}{n}\sum_{i} \bar y_i,\qquad
\bar y^{\obs}=\frac{1}{N}\sum_{i,j} y_{ij},
\]
where $N=\sum_{i} J_i$ (we suppress summation limits where the range is clear). 
So $J_i$ is the size of group $i$, $n$ is the number of groups, and $N$ is the total number of observations. Then, $\bar y_i=u_i+\bar v_i$, $\bar y^{\grp}=\bar u^{\grp}+\bar v^{\grp}$, and $\bar y^{\obs}=\bar u^{\obs}+\bar v^{\obs}$.

Unbiased estimators of $\mu_{2v}$, $\mu_{3v}$, $\mu_{4v}$ are
\begin{align*}
	\widehat\mu_{2v}
	&=
	\frac{\sum_{i,j} (y_{ij}-\bar y_i)^2}{\sum_{i} (J_i-1)},
	\quad
	\widehat\mu_{3v}
	=
	\frac{\sum_{i,j} (y_{ij}-\bar y_i)^3}{\sum_{i} \frac{(J_i-1)(J_i-2)}{J_i}}, \quad
	\widehat\mu_{4v}
	=
	\frac{
		a_{22}^v \sum_{i,j} (y_{ij}-\bar y_i)^4
		-a_{12}^v \sum_{i}\sum_{1\le j<j'\le J_i} (y_{ij}-\bar y_i)^2 (y_{ij'}-\bar y_i)^2
	}{
		a_{11}^v a_{22}^v-a_{12}^v a_{21}^v
	}.
\end{align*}
The coefficients $a_{11}^v,a_{12}^v,a_{21}^v,a_{22}^v$ are given in Appendix~\ref{ap:2level}.

Unbiased estimators of $\mu_{2u}$, $\mu_{3u}$, and $\mu_{4u}$ under group-level averaging are
\begin{align*}
	\widehat\mu_{2u}^{\grp}
	&=
	\frac{1}{n-1}\sum_{i} (\bar y_i-\bar y^{\grp})^2
	-\frac{\widehat\mu_{2v}}{n}\sum_{i}\frac{1}{J_i},
	\qquad
	\widehat\mu_{3u}^{\grp}
	=
	\frac{n}{(n-1)(n-2)}\sum_{i} (\bar y_i-\bar y^{\grp})^3
	-\frac{\widehat\mu_{3v}}{n}\sum_{i}\frac{1}{J_i^2},\\
	\widehat\mu_{4u}^{\grp}
	&=
	\frac{
		a_{22}^{\grp}\Big(\sum_{i} (\bar y_i-\bar y^{\grp})^4-\widehat T_{4}^{\grp}\Big)
		-a_{12}^{\grp}\Big(\sum_{i\neq i'} (\bar y_i-\bar y^{\grp})^2(\bar y_{i'}-\bar y^{\grp})^2-\widehat T_{22}^{\grp}\Big)
	}{
		a_{11}^{\grp}a_{22}^{\grp}-a_{12}^{\grp}a_{21}^{\grp}
	}.
\end{align*}
The coefficients $a_{11}^{\grp},a_{12}^{\grp},a_{21}^{\grp},$ and $a_{22}^{\grp}$ and the adjustment terms $\widehat T_{4}^{\grp}$ and $\widehat T_{22}^{\grp}$ are given in Appendix~\ref{ap:2level}.

Unbiased estimators of $\mu_{2u}$, $\mu_{3u}$, and $\mu_{4u}$ under observation-level averaging are
\begin{align*}
	\widehat\mu_{2u}^{\obs}
	&=
	\frac{\sum_{i} (\bar y_i-\bar y^{\obs})^2
		-\widehat\mu_{2v}\sum_{i}\Big(\frac{1}{J_i}-\frac{1}{N}\Big)}
	{\sum_{i}\Big(1-\frac{2J_i}{N}+\frac{1}{N^2}\sum_{m=1}^n J_m^2\Big)},
	\quad
	\widehat\mu_{3u}^{\obs}
	=
	\frac{\sum_{i} (\bar y_i-\bar y^{\obs})^3
		-\widehat\mu_{3v}\sum_{i} \frac{(N-J_i)(N-2J_i)}{N^2J_i^2}}
	{\sum_{i}\Big(1-\frac{3J_i}{N}+\frac{3J_i^2}{N^2}-\frac{1}{N^3}\sum_{m=1}^n J_m^3\Big)},\\
	\widehat\mu_{4u}^{\obs}
	&=
	\frac{
		a_{22}^{\obs}\Big(\sum_{i} (\bar y_i-\bar y^{\obs})^4-\widehat T_{4}^{\obs}\Big)
		-a_{12}^{\obs}\Big(\sum_{i\neq i'} (\bar y_i-\bar y^{\obs})^2(\bar y_{i'}-\bar y^{\obs})^2-\widehat T_{22}^{\obs}\Big)
	}{
		a_{11}^{\obs}a_{22}^{\obs}-a_{12}^{\obs}a_{21}^{\obs}
	}.
\end{align*}
The coefficients $a_{11}^{\obs},a_{12}^{\obs},a_{21}^{\obs}$, and $a_{22}^{\obs}$ and the adjustment terms $\widehat T_{4}^{\obs}$ and $\widehat T_{22}^{\obs}$ are given in Appendix~\ref{ap:2level}.

Under balanced designs ($J_i=J$ for all~$i$), the group- and observation-level averaging estimators coincide. Under unbalanced designs, the difference between them is weighting: group-level averaging gives each group effect $u_i$ equal weight, while observation-level averaging weights group effects in proportion to $J_i$. Thus, the latter gives more influence to larger groups. This provides the main source of guidance to researchers: group-level averaging should be used when each group is meant to count equally, whereas observation-level averaging should be used when each lowest-level unit is meant to count equally. For example, if $u_i$ is a teacher effect and the goal is to describe the distribution of teacher quality across teachers, group-level averaging is natural. If instead the goal is to describe the distribution of teacher quality as experienced by a randomly sampled student, observation-level averaging is natural, since students are more likely to be assigned to larger classes. Group-level averaging may also be preferable when comparing moments across samples or periods, since otherwise changes in group sizes mechanically change the weights attached to the same group effects.

A second consideration is the variance of the estimator. To see the intuition, consider the second-moment estimators. Both use between-group variation in the sample means $\bar y_i$, but these means are measured with different precision: writing $\bar y_i=u_i+\bar v_i$, we have $\Var(\bar y_i)=\mu_{2u}+\mu_{2v}/J_i$, so smaller groups yield noisier sample means. Observation-level averaging therefore places more weight on larger groups, for which $\mu_{2v}/J_i$ is smaller. This suggests that, when group sizes are very unequal and $\mu_{2v}$ is large relative to $\mu_{2u}$, observation-level averaging may have smaller estimator variance. When $\mu_{2v}$ is small relative to $\mu_{2u}$, however, this variance advantage is attenuated. Because the two averaging schemes correspond to different objects of interest under unbalancedness, estimator variance is secondary to the choice of weighting.

\section{Three-level model}  \label{se:3level}

Suppose we observe $y_{ijk}$ in the three-level model,
\begin{align*}
	y_{ijk}=u_i+v_{ij}+w_{ijk},\quad i=1,\dots,n,\quad j=1,\dots,J_i,\quad k=1,\dots,K_{ij},
\end{align*}
where $\{u_i\}_{i}$, $\{v_{ij}\}_{i,j}$, and $\{w_{ijk}\}_{i,j,k}$ are unobserved, mutually independent, and within each level i.i.d., with
$\E[u]=\E[v]=\E[w]=0$. We assume $n\geq 3$, $J_i\geq 3$, and $K_{ij}\geq 3$ for all $(i,j)$. 
Let $\mu_{ku}=\E[u^k]$, $\mu_{kv}=\E[v^k]$, and $\mu_{kw}=\E[w^k]$ for $k=2,3$, and define sample means as follows,
	\begin{align*}
		\bar y_{ij}&=\frac{1}{K_{ij}}\sum_{k}y_{ijk},\quad
		\bar y_i^{\grp}=\frac{1}{J_i}\sum_{j}\bar y_{ij},\quad
		\bar y_i^{\obs}=\frac{1}{K_i}\sum_{j}K_{ij}\bar y_{ij},\quad
		\bar y^{\grp}=\frac{1}{n}\sum_{i} \bar y_i^{\grp},\quad
		\bar y^{\obs}=\frac{1}{N}\sum_{i,j,k} y_{ijk},\\
		\bar u^{\grp}&=\frac{1}{n}\sum_{i} u_i,\quad
		\bar u^{\obs}=\frac{1}{N}\sum_{i} K_i u_i,\quad
		\bar v_i^{\grp}=\frac{1}{J_i}\sum_{j} v_{ij},\quad
		\bar v_i^{\obs}=\frac{1}{K_i}\sum_{j}K_{ij}v_{ij},\quad
		\bar v^{\grp}=\frac{1}{n}\sum_{i} \bar v_i^{\grp},\quad
		\bar v^{\obs}=\frac{1}{N}\sum_{i,j}K_{ij}v_{ij},\\
		\bar w_{ij}&=\frac{1}{K_{ij}}\sum_{k} w_{ijk},\quad
		\bar w_i^{\grp}=\frac{1}{J_i}\sum_{j}\bar w_{ij},\quad
		\bar w_i^{\obs}=\frac{1}{K_i}\sum_{j}K_{ij}\bar w_{ij},\quad
		\bar w^{\grp}=\frac{1}{n}\sum_{i} \bar w_i^{\grp},\quad
		\bar w^{\obs}=\frac{1}{N}\sum_{i,j,k} w_{ijk}.
	\end{align*}
where $K_i=\sum_{j}K_{ij}$ and $N=\sum_{i} K_i$. So $K_{ij}$ is the size of group $(i,j)$, $J_i$ is the size of group $i$, $K_i$ is the number of groups in $i$, $n$ is the number of groups overall, and $N$ is the total number of observations. 
Then
$
	\bar y_{ij}=u_i+v_{ij}+\bar w_{ij}, $
	$\bar y_i^{\grp}=u_i+\bar v_i^{\grp}+\bar w_i^{\grp}, $
$	\bar y_i^{\obs}=u_i+\bar v_i^{\obs}+\bar w_i^{\obs}, $
$	\bar y^{\grp}=\bar u^{\grp}+\bar v^{\grp}+\bar w^{\grp}, $ and
$	\bar y^{\obs}=\bar u^{\obs}+\bar v^{\obs}+\bar w^{\obs}.$

Unbiased estimators of $\mu_{2w}$ and $\mu_{3w}$ are
\begin{align*}
	\widehat\mu_{2w}
	&=
	\frac{\sum_{i,j,k} (y_{ijk}-\bar y_{ij})^2}
	{\sum_{i,j}(K_{ij}-1)},
	\qquad
	\widehat\mu_{3w}
	=
	\frac{\sum_{i,j,k} (y_{ijk}-\bar y_{ij})^3}
	{\sum_{i,j}\frac{(K_{ij}-1)(K_{ij}-2)}{K_{ij}}}.
\end{align*}

Unbiased estimators of $\mu_{2v}$, $\mu_{3v}$,  $\mu_{2u}$, and $\mu_{3u}$ under group-level averaging are
\begin{align*}
	\widehat\mu_{2v}^{\grp}
	&=
	\frac{
		\sum_{i,j} \big(\bar y_{ij}-\bar y_i^{\grp}\big)^2
		-\widehat\mu_{2w}\sum_{i} \frac{J_i-1}{J_i}\sum_{j}\frac{1}{K_{ij}}
	}{
		\sum_{i} (J_i-1)
	},
	\qquad
	\widehat\mu_{3v}^{\grp}
	=
	\frac{
		\sum_{i,j} \big(\bar y_{ij}-\bar y_i^{\grp}\big)^3
		-\widehat\mu_{3w}\sum_{i} \frac{(J_i-1)(J_i-2)}{J_i^2}\sum_{j}\frac{1}{K_{ij}^2}
	}{
		\sum_{i} \frac{(J_i-1)(J_i-2)}{J_i}
	},\\
	\widehat\mu_{2u}^{\grp}
	&=
	\frac{1}{n-1}\sum_{i} \big(\bar y_i^{\grp}-\bar y^{\grp}\big)^2
	-\frac{\widehat\mu_{2v}^{\grp}}{n}\sum_{i}\frac{1}{J_i}
	-\frac{\widehat\mu_{2w}}{n}\sum_{i}\frac{1}{J_i^2}\sum_{j}\frac{1}{K_{ij}},
	\\
	\widehat\mu_{3u}^{\grp}
	&=
	\frac{n}{(n-1)(n-2)}\sum_{i} \big(\bar y_i^{\grp}-\bar y^{\grp}\big)^3
	-\frac{\widehat\mu_{3v}^{\grp}}{n}\sum_{i}\frac{1}{J_i^2}
	-\frac{\widehat\mu_{3w}}{n}\sum_{i}\frac{1}{J_i^3}\sum_{j}\frac{1}{K_{ij}^2}.
\end{align*}

Unbiased estimators of $\mu_{2v}$, $\mu_{3v}$,  $\mu_{2u}$, and $\mu_{3u}$  under observation-level averaging are
	\begin{align*}
		\widehat\mu_{2v}^{\obs}
		&=
		\frac{
			\sum_{i,j} \big(\bar y_{ij}-\bar y_i^{\obs}\big)^2
			-\widehat\mu_{2w}\sum_{i,j}\Big(\frac{1}{K_{ij}}-\frac{1}{K_i}\Big)
		}{
			\sum_{i,j}\Big(1-\frac{2K_{ij}}{K_i}+\frac{1}{K_i^2}\sum_{j'=1}^{J_i}K_{ij'}^2\Big)
		},
		\qquad
		\widehat\mu_{3v}^{\obs}
		=
		\frac{
			\sum_{i,j} \big(\bar y_{ij}-\bar y_i^{\obs}\big)^3
			-\widehat\mu_{3w}\sum_{i,j}\frac{(K_i-K_{ij})(K_i-2K_{ij})}{K_i^2K_{ij}^2}
		}{
			\sum_{i,j}\Big(1-\frac{3K_{ij}}{K_i}+\frac{3K_{ij}^2}{K_i^2}-\frac{1}{K_i^3}\sum_{j'=1}^{J_i}K_{ij'}^3\Big)
		},\\
		\widehat\mu_{2u}^{\obs}
		&=
		\frac{
			\sum_{i} \big(\bar y_i^{\obs}-\bar y^{\obs}\big)^2
			-\widehat\mu_{2v}^{\obs}\sum_{i}\Bigg[
			\Big(1-\frac{K_i}{N}\Big)^2\frac{1}{K_i^2}\sum_{j}K_{ij}^2
			+\frac{1}{N^2}\sum_{i'\neq i}\sum_{j=1}^{J_{i'}}K_{i'j}^2
			\Bigg]
			-\widehat\mu_{2w}\sum_{i}\Big(\frac{1}{K_i}-\frac{1}{N}\Big)
		}{
			\sum_{i}\Big(1-\frac{2K_i}{N}+\frac{1}{N^2}\sum_{i'=1}^n K_{i'}^2\Big)
		},
		\\
		\widehat\mu_{3u}^{\obs}
		&=
		\frac{
			\sum_{i} \big(\bar y_i^{\obs}-\bar y^{\obs}\big)^3
			-\widehat\mu_{3v}^{\obs}\sum_{i}\Bigg[
			\Big(1-\frac{K_i}{N}\Big)^3\frac{1}{K_i^3}\sum_{j}K_{ij}^3
			+\Big(-\frac{1}{N}\Big)^3\sum_{i'\neq i}\sum_{j=1}^{J_{i'}}K_{i'j}^3
			\Bigg]
			-\widehat\mu_{3w}\sum_{i} \frac{(N-K_i)(N-2K_i)}{N^2K_i^2}
		}{
			\sum_{i}\Big(1-\frac{3K_i}{N}+\frac{3K_i^2}{N^2}-\frac{1}{N^3}\sum_{i'=1}^n K_{i'}^3\Big)
		}.
	\end{align*}

\section{Conclusion} \label{se:con}

This paper derives closed-form unbiased estimators for the second, third, and fourth central moments in unbalanced two-level models, and for the second and third central moments in unbalanced three-level models. We provide these results under both group-level and observation-level averaging, allowing researchers to align estimation with their sampling design and population of interest.

The approach extends to richer multilevel designs via a three-step strategy: (i) write centered sample averages as weighted sums of latent components; (ii) apply Lemma~\ref{lma:powers} to express expectations of powers and cross-products of these averages in terms of latent moments; and (iii) solve the resulting linear system using sample analogs and previously derived lower-order estimators. Already at the fourth moment this requires systems involving cross-products, so closed-form expressions beyond the cases treated here are best derived case-by-case for the hierarchy and moments of interest.

	\bibliography{biblio}

\appendix

\section{Preliminaries}

\begin{lma}[Expected powers of weighted sums]\label{lma:powers}
	Let $x_1,\dots,x_m$ be i.i.d.\ with $\E[x]=0$ and $\mu_r=\E[x^r]$ for $r=2,3,4$.
	Let $S_m=\sum_{\ell=1}^m w_\ell x_\ell$. Then
	\begin{align*}
		\E[S_m^2]&=\mu_2\sum_{\ell=1}^m w_\ell^2,\qquad
		\E[S_m^3]=\mu_3\sum_{\ell=1}^m w_\ell^3,\qquad
		\E[S_m^4]=\mu_4\sum_{\ell=1}^m w_\ell^4
		+3\mu_2^2\Big(\Big(\sum_{\ell=1}^m w_\ell^2\Big)^2-\sum_{\ell=1}^m w_\ell^4\Big).
	\end{align*}
\end{lma}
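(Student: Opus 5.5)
The plan is to expand each power of $S_m$ as a multiple sum over indices and take expectations term by term. By linearity,
\[
\E[S_m^r]=\sum_{\ell_1,\dots,\ell_r} w_{\ell_1}\cdots w_{\ell_r}\,\E[x_{\ell_1}\cdots x_{\ell_r}],\qquad r=2,3,4.
\]
Independence across $\ell$ factors each expectation into a product of moments, one for each distinct index value. Because $\E[x]=0$, every index tuple in which some value appears exactly once contributes zero. So the whole argument reduces to listing the partitions of the $r$ positions into blocks of size at least two and counting how many index tuples correspond to each partition.

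For $r=2$ the only surviving pattern is $\ell_1=\ell_2$, which gives $\mu_2\sum_\ell w_\ell^2$.

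For $r=3$ the only pattern with no singleton block is $\ell_1=\ell_2=\ell_3$, since a split $2+1$ contains a singleton. This gives $\mu_3\sum_\ell w_\ell^3$.

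For $r=4$ there are two admissible patterns:
\begin{itemize}
\item all four indices equal, contributing $\mu_4\sum_\ell w_\ell^4$;
\item two distinct indices, each appearing twice.
\end{itemize}
For the second pattern, the positions $\{1,2,3,4\}$ can be split into two pairs in three ways: $\{12\}\{34\}$, $\{13\}\{24\}$ and $\{14\}\{23\}$. For each split, the sum over ordered pairs $(\ell,\ell')$ with $\ell\neq\ell'$ assigned to the two blocks is
\[
\mu_2^2\sum_{\ell\neq\ell'}w_\ell^2w_{\ell'}^2=\mu_2^2\Big(\Big(\sum_\ell w_\ell^2\Big)^2-\sum_\ell w_\ell^4\Big).
\]
Multiplying by $3$ and adding the first pattern gives the stated formula.

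The only delicate step is the combinatorial bookkeeping in the fourth-moment case. One must make sure that each index tuple is counted exactly once:
\begin{itemize}
\item the requirement $\ell\neq\ell'$ prevents the all-equal tuples from also being counted under the two-pair pattern, which is why $\sum_\ell w_\ell^4$ is subtracted;
\item ordered assignments of $(\ell,\ell')$ to the two blocks of a fixed split are already distinct tuples, so the factor is exactly $3$ with no further doubling.
\end{itemize}
As a check, the balanced case $w_\ell=1$ gives $m\mu_4+3m(m-1)\mu_2^2$, which is the familiar expression for the fourth moment of a sum of i.i.d.\ centered variables.
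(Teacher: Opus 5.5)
Your proposal is correct and follows essentially the same route as the paper: expand the power, drop every term in which some $x_\ell$ appears only once (using independence and $\E[x]=0$), and count the surviving index patterns. Your three pair-splits with ordered $\ell\neq\ell'$ give $3\mu_2^2\sum_{\ell\neq\ell'}w_\ell^2w_{\ell'}^2$, which is exactly the paper's $6\mu_2^2\sum_{\ell<r}w_\ell^2w_r^2$.
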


\begin{proof}
	By expanding and using independence with $\E[x]=0$, any product term in which some $x_\ell$ appears to the first power has expectation zero. Therefore
	\begin{align*}
		\E[S_m^2]&=\sum_{\ell=1}^m w_\ell^2\E[x_\ell^2]=\mu_2\sum_{\ell=1}^m w_\ell^2,\qquad
		\E[S_m^3]=\sum_{\ell=1}^m w_\ell^3\E[x_\ell^3]=\mu_3\sum_{\ell=1}^m w_\ell^3,\\
		\E[S_m^4]&=\sum_{\ell=1}^m w_\ell^4\E[x_\ell^4]
		+6\sum_{\ell<r} w_\ell^2 w_r^2 \E[x_\ell^2]\E[x_r^2]
		=\mu_4\sum_{\ell=1}^m w_\ell^4
		+3\mu_2^2\Big(\Big(\sum_{\ell=1}^m w_\ell^2\Big)^2-\sum_{\ell=1}^m w_\ell^4\Big). \qedhere
	\end{align*}
\end{proof}

In what follows, unbiased estimators are obtained by replacing expectations with sample analogs and substituting unbiased estimators for unknown moments as they appear.

\section{Proofs for Two-Level Model} \label{ap:2level}

Using within-group variation,
\begin{align*}
	y_{ij}-\bar y_i
	&=v_{ij}-\bar v_i
	=\Big(1-\frac{1}{J_i}\Big)v_{ij}-\frac{1}{J_i}\sum_{j'\neq j} v_{ij'} .
\end{align*}
Applying Lemma~\ref{lma:powers} gives
\begin{align*}
	\E[(y_{ij}-\bar y_i)^2]
	&=\mu_{2v}\Big[\Big(1-\frac{1}{J_i}\Big)^2+(J_i-1)\Big(\frac{1}{J_i}\Big)^2\Big]
	=\mu_{2v}\frac{J_i-1}{J_i},\\
	\E[(y_{ij}-\bar y_i)^3]
	&=\mu_{3v}\Big[\Big(1-\frac{1}{J_i}\Big)^3+(J_i-1)\Big(-\frac{1}{J_i}\Big)^3\Big]
	=\mu_{3v}\frac{(J_i-1)(J_i-2)}{J_i^2},\\
	\E[(y_{ij}-\bar y_i)^4]
	&=\mu_{4v}\Big[\Big(1-\frac{1}{J_i}\Big)^4+(J_i-1)\Big(\frac{1}{J_i}\Big)^4\Big]\\
	&\quad
	+3\mu_{2v}^2\Bigg(
	\Big[\Big(1-\frac{1}{J_i}\Big)^2+(J_i-1)\Big(\frac{1}{J_i}\Big)^2\Big]^2
	-\Big[\Big(1-\frac{1}{J_i}\Big)^4+(J_i-1)\Big(\frac{1}{J_i}\Big)^4\Big]
	\Bigg)\\
	&=\mu_{4v}\frac{(J_i-1)(J_i^2-3J_i+3)}{J_i^3}
	+3\mu_{2v}^2\frac{(J_i-1)(2J_i-3)}{J_i^3},\\
		\E[(y_{ij}-\bar y_i)^2 (y_{ij'}-\bar y_i)^2]
	&=
	\mu_{4v}\Bigg[
	\Big(1-\frac{1}{J_i}\Big)^2\Big(\frac{1}{J_i}\Big)^2
	+\Big(\frac{1}{J_i}\Big)^2\Big(1-\frac{1}{J_i}\Big)^2
	+(J_i-2)\Big(\frac{1}{J_i}\Big)^4
	\Bigg] \\
	&+\mu_{2v}^2\Bigg\{
	\Bigg[\Big(1-\frac{1}{J_i}\Big)^2+(J_i-1)\Big(\frac{1}{J_i}\Big)^2\Bigg]^2
	+2\Bigg[(J_i-2)\Big(\frac{1}{J_i}\Big)^2 -\Big(1-\frac{1}{J_i}\Big)\Big(\frac{1}{J_i}\Big) 
	-\Big(\frac{1}{J_i}\Big)\Big(1-\frac{1}{J_i}\Big)
	\Bigg]^2 \\
	&\qquad
	-3\Bigg[
	\Big(1-\frac{1}{J_i}\Big)^2\Big(\frac{1}{J_i}\Big)^2
	+\Big(\frac{1}{J_i}\Big)^2\Big(1-\frac{1}{J_i}\Big)^2
	+(J_i-2)\Big(\frac{1}{J_i}\Big)^4
	\Bigg]
	\Bigg\}\\
	&=	\mu_{4v}\frac{2J_i-3}{J_i^3} 
	+\mu_{2v}^2\frac{J_i^3-2J_i^2-3J_i+9}{J_i^3}.
\end{align*}

Summing the fourth-moment identity over $(i,j)$ and the cross-product identity over  $(i, j<j')$ gives a $2\times2$ system. Inverting yields
\[
\binom{\mu_{4v}}{\mu_{2v}^2}
=
\begin{pmatrix}
	a_{11}^v & a_{12}^v\\
	a_{21}^v & a_{22}^v
\end{pmatrix}^{-1}
\binom{
	\E\!\left[\sum_{i=1}^n\sum_{j=1}^{J_i} (y_{ij}-\bar y_i)^4\right]
}{
	\E\!\left[\sum_{i=1}^n\sum_{1\le j<j'\le J_i} (y_{ij}-\bar y_i)^2 (y_{ij'}-\bar y_i)^2\right]
},
\]
where
	\begin{align*}
		a_{11}^v&=\sum_{i=1}^n \frac{(J_i-1)(J_i^2-3J_i+3)}{J_i^2},\qquad
		a_{12}^v=3\sum_{i=1}^n \frac{(J_i-1)(2J_i-3)}{J_i^2},\\
		a_{21}^v&=\frac12\sum_{i=1}^n \frac{(J_i-1)(2J_i-3)}{J_i^2},\qquad
		a_{22}^v=\frac12\sum_{i=1}^n \frac{(J_i-1)(J_i^3-2J_i^2-3J_i+9)}{J_i^2}.
	\end{align*}

Using between-group variation and group-averages,
\begin{align*}
	\bar y_i-\bar y^{\grp}
	&=\Big[u_i-\frac{1}{n}\sum_{i'=1}^n u_{i'}\Big]
	+\Big[\bar v_i-\frac{1}{n}\sum_{i'=1}^n \bar v_{i'}\Big]
	=\Big[\Big(1-\frac{1}{n}\Big)u_i-\frac{1}{n}\sum_{i'\neq i}u_{i'}\Big]
	+\Big[\Big(1-\frac{1}{n}\Big)\bar v_i-\frac{1}{n}\sum_{i'\neq i}\bar v_{i'}\Big].
\end{align*}
Applying Lemma~\ref{lma:powers} and using independence gives
\begin{align*}
	\E[(\bar y_i-\bar y^{\grp})^2]
	&=\mu_{2u}\Big[\Big(1-\frac{1}{n}\Big)^2+(n-1)\Big(\frac{1}{n}\Big)^2\Big]
	+\mu_{2v}\Big[\Big(1-\frac{1}{n}\Big)^2\frac{1}{J_i}
	+\Big(\frac{1}{n}\Big)^2\sum_{i'\neq i}\frac{1}{J_{i'}}\Big]\\
	&=\mu_{2u}\frac{n-1}{n}
	+\mu_{2v}\Big(\frac{1}{J_i}-\frac{2}{nJ_i}+\frac{1}{n^2}\sum_{i'=1}^n\frac{1}{J_{i'}}\Big),\\[4pt]
	\E[(\bar y_i-\bar y^{\grp})^3]
	&=\mu_{3u}\Big[\Big(1-\frac{1}{n}\Big)^3+(n-1)\Big(-\frac{1}{n}\Big)^3\Big]
	+\mu_{3v}\Big[\Big(1-\frac{1}{n}\Big)^3\frac{1}{J_i^2}
	+\Big(-\frac{1}{n}\Big)^3\sum_{i'\neq i}\frac{1}{J_{i'}^2}\Big]\\
	&=\mu_{3u}\frac{(n-1)(n-2)}{n^2}
	+\mu_{3v}\Big(\frac{1}{J_i^2}-\frac{3}{nJ_i^2}+\frac{3}{n^2J_i^2}-\frac{1}{n^3}\sum_{i'=1}^n\frac{1}{J_{i'}^2}\Big),\\[4pt]
	\E[(\bar y_i-\bar y^{\grp})^4]
	&=\E[(u_i-\bar u^{\grp})^4]+6\,\E[(u_i-\bar u^{\grp})^2]\E[(\bar v_i-\bar v^{\grp})^2]+\E[(\bar v_i-\bar v^{\grp})^4]\\
	&=\mu_{4u}\frac{(n-1)(n^2-3n+3)}{n^3}
	+3\mu_{2u}^2\frac{(n-1)(2n-3)}{n^3}\\
	&\quad
	+6\mu_{2u}\mu_{2v}\frac{n-1}{n}\Big(\frac{1}{J_i}-\frac{2}{nJ_i}+\frac{1}{n^2}\sum_{i'=1}^n\frac{1}{J_{i'}}\Big)
	+\mu_{4v}\Big[\Big(1-\frac{1}{n}\Big)^4\frac{1}{J_i^3}
	+\Big(\frac{1}{n}\Big)^4\sum_{i'\neq i}\frac{1}{J_{i'}^3}\Big]\\
	&\quad
	+3\mu_{2v}^2\Bigg(
	\Big[\Big(1-\frac{1}{n}\Big)^2\frac{1}{J_i}
	+\Big(\frac{1}{n}\Big)^2\sum_{i'\neq i}\frac{1}{J_{i'}}\Big]^2
	-\Big[\Big(1-\frac{1}{n}\Big)^4\frac{1}{J_i^3}
	+\Big(\frac{1}{n}\Big)^4\sum_{i'\neq i}\frac{1}{J_{i'}^3}\Big]
	\Bigg),\\[4pt]
	\E[(\bar y_i-\bar y^{\grp})^2(\bar y_{i'}-\bar y^{\grp})^2]
	&=
	\mu_{4u}\frac{2n-3}{n^3}
	+\mu_{2u}^2\frac{n^3+6n^2-11n+9}{n^3} 	-4\mu_{2u}\mu_{2v}\frac{n-1}{n^2}\Bigg(
	\frac{1}{n^2}\sum_{m=1}^n\frac{1}{J_m}-\frac{1}{nJ_i}-\frac{1}{nJ_{i'}}
	\Bigg)\\
	&\quad
	+\mu_{2u}\mu_{2v}\frac{n-1}{n}\Bigg[
	\Big(\frac{1}{J_i}-\frac{2}{nJ_i}+\frac{1}{n^2}\sum_{m=1}^n\frac{1}{J_m}\Big)
	+\Big(\frac{1}{J_{i'}}-\frac{2}{nJ_{i'}}+\frac{1}{n^2}\sum_{m=1}^n\frac{1}{J_m}\Big)
	\Bigg]\\
	&\quad
	+\mu_{2v}^2\Bigg[
	\Big(\frac{1}{J_i}-\frac{2}{nJ_i}+\frac{1}{n^2}\sum_{m=1}^n\frac{1}{J_m}\Big)
	\Big(\frac{1}{J_{i'}}-\frac{2}{nJ_{i'}}+\frac{1}{n^2}\sum_{m=1}^n\frac{1}{J_m}\Big)
	+2\Big(\frac{1}{n^2}\sum_{m=1}^n\frac{1}{J_m}-\frac{1}{nJ_i}-\frac{1}{nJ_{i'}}\Big)^2
	\Bigg]\\
	&\quad
	+\big(\mu_{4v}-3\mu_{2v}^2\big)\frac{1}{n^4}\Bigg[
	\sum_{m=1}^n\frac{1}{J_m^3}+(n^2-2n)\Big(\frac{1}{J_i^3}+\frac{1}{J_{i'}^3}\Big)
	\Bigg].
\end{align*}

Summing the fourth-moment identity over $i$ and the cross-product identity over $i\neq i'$ gives a $2\times2$ system. Inverting yields
\[
\binom{\mu_{4u}}{\mu_{2u}^2}
=
\begin{pmatrix}
	a_{11}^{\grp} & a_{12}^{\grp}\\
	a_{21}^{\grp} & a_{22}^{\grp}
\end{pmatrix}^{-1}
\binom{
	\E\!\left[\sum_{i=1}^n (\bar y_i-\bar y^{\grp})^4 - T_4^{\grp}\right]
}{
	\E\!\left[\sum_{i\neq i'} (\bar y_i-\bar y^{\grp})^2(\bar y_{i'}-\bar y^{\grp})^2 - T_{22}^{\grp}\right]
},
\]
\vspace{-0.035cm}
where
	\begin{align*}
	a_{11}^{\grp}&=\frac{(n-1)(n^2-3n+3)}{n^2},\quad
	a_{12}^{\grp}=\frac{3(n-1)(2n-3)}{n^2},\quad
	a_{21}^{\grp}=\frac{(n-1)(2n-3)}{n^2},\quad
	a_{22}^{\grp}=\frac{(n-1)(n^3+6n^2-11n+9)}{n^2},\\[6pt]
	T_4^{\grp}
	&=
	6\mu_{2u}\mu_{2v}\,\frac{(n-1)^2}{n^2}\sum_{i=1}^n\frac{1}{J_i}
	+\mu_{4v}\,\frac{(n-1)(n^2-3n+3)}{n^3}\sum_{i=1}^n\frac{1}{J_i^3}\\
	& \quad +3\mu_{2v}^2\Bigg(
	\frac{(n-2)^2}{n^2}\sum_{i=1}^n\frac{1}{J_i^2}
	+\frac{2n-3}{n^3}\Big(\sum_{i=1}^n\frac{1}{J_i}\Big)^2
	-\frac{(n-1)(n^2-3n+3)}{n^3}\sum_{i=1}^n\frac{1}{J_i^3}
	\Bigg),\\
	T_{22}^{\grp}
	&=
	2\mu_{2u}\mu_{2v}\,\frac{(n-1)\big((n-1)^2+2\big)}{n^2}\sum_{i=1}^n\frac{1}{J_i}
	+\mu_{4v}\,\frac{(n-1)(2n-3)}{n^3}\sum_{i=1}^n\frac{1}{J_i^3}\\
	&\quad+\mu_{2v}^2\Bigg(
	\frac{n^3-2n^2-3n+9}{n^3}\Big(\sum_{i=1}^n\frac{1}{J_i}\Big)^2
	+\frac{(n-2)(6-n)}{n^2}\sum_{i=1}^n\frac{1}{J_i^2}
	-3\,\frac{(n-1)(2n-3)}{n^3}\sum_{i=1}^n\frac{1}{J_i^3}
	\Bigg).
	\end{align*}

Similarly, using between-group variation and observation-averages,
\begin{align*}
	\bar y_i-\bar y^{\obs}
	&=\Big[\Big(1-\frac{J_i}{N}\Big)u_i-\frac{1}{N}\sum_{i'\neq i}J_{i'}u_{i'}\Big]
	+\Big[\Big(\frac{1}{J_i}-\frac{1}{N}\Big)\sum_{j=1}^{J_i}v_{ij}
	-\frac{1}{N}\sum_{i'\neq i}\sum_{j=1}^{J_{i'}}v_{i'j}\Big].
\end{align*}
Applying Lemma~\ref{lma:powers} and using independence gives
\begin{align*}
	\E[(\bar y_i-\bar y^{\obs})^2]
	&=\mu_{2u}\Big[\Big(1-\frac{J_i}{N}\Big)^2+\frac{1}{N^2}\sum_{i'\neq i}J_{i'}^2\Big]
	+\mu_{2v}\Big[J_i\Big(\frac{1}{J_i}-\frac{1}{N}\Big)^2+(N-J_i)\Big(\frac{1}{N}\Big)^2\Big]\\
	&=\mu_{2u}\Big(1-\frac{2J_i}{N}+\frac{1}{N^2}\sum_{m=1}^n J_m^2\Big)
	+\mu_{2v}\Big(\frac{1}{J_i}-\frac{1}{N}\Big),\\
	\E[(\bar y_i-\bar y^{\obs})^3]
	&=\mu_{3u}\Big[\Big(1-\frac{J_i}{N}\Big)^3-\frac{1}{N^3}\sum_{i'\neq i}J_{i'}^3\Big]
	+\mu_{3v}\Big[J_i\Big(\frac{1}{J_i}-\frac{1}{N}\Big)^3-(N-J_i)\Big(\frac{1}{N}\Big)^3\Big]\\
	&=\mu_{3u}\Big(1-\frac{3J_i}{N}+\frac{3J_i^2}{N^2}-\frac{1}{N^3}\sum_{m=1}^n J_m^3\Big)
	+\mu_{3v}\frac{(N-J_i)(N-2J_i)}{N^2J_i^2},\\
	\E[(\bar y_i-\bar y^{\obs})^4]
	&=\E[(u_i-\bar u^{\obs})^4]+6\,\E[(u_i-\bar u^{\obs})^2]\E[(\bar v_i-\bar v^{\obs})^2]+\E[(\bar v_i-\bar v^{\obs})^4]\\
	&=\mu_{4u}\Big[\Big(1-\frac{J_i}{N}\Big)^4+\frac{1}{N^4}\sum_{i'\neq i}J_{i'}^4\Big]
	+3\mu_{2u}^2\Bigg(
	\Big[\Big(1-\frac{J_i}{N}\Big)^2+\frac{1}{N^2}\sum_{i'\neq i}J_{i'}^2\Big]^2
	-\Big[\Big(1-\frac{J_i}{N}\Big)^4+\frac{1}{N^4}\sum_{i'\neq i}J_{i'}^4\Big]
	\Bigg)\\
	&\quad
	+6\mu_{2u}\mu_{2v}\Big[\Big(1-\frac{J_i}{N}\Big)^2+\frac{1}{N^2}\sum_{i'\neq i}J_{i'}^2\Big]
	\Big(\frac{1}{J_i}-\frac{1}{N}\Big)
	+\mu_{4v}\Big[J_i\Big(\frac{1}{J_i}-\frac{1}{N}\Big)^4+(N-J_i)\Big(\frac{1}{N}\Big)^4\Big]\\
	&\quad
	+3\mu_{2v}^2\Bigg(
	\Big(\frac{1}{J_i}-\frac{1}{N}\Big)^2
	-\Big[J_i\Big(\frac{1}{J_i}-\frac{1}{N}\Big)^4+(N-J_i)\Big(\frac{1}{N}\Big)^4\Big]
	\Bigg),\\[4pt]
	\E\!\left[(\bar y_i-\bar y^{\obs})^2(\bar y_{i'}-\bar y^{\obs})^2\right]
	&=
	\mu_{4u}\Bigg[
	\frac{J_i^2}{N^2}\Big(1-\frac{J_i}{N}\Big)^2
	+\frac{J_{i'}^2}{N^2}\Big(1-\frac{J_{i'}}{N}\Big)^2
	+\frac{1}{N^4}\Big(\sum_{m=1}^n J_m^4-J_i^4-J_{i'}^4\Big)
	\Bigg]\\
	&\quad
	+\mu_{2u}^2\Bigg(
	\Big(1-\frac{2J_i}{N}+\frac{1}{N^2}\sum_{m=1}^n J_m^2\Big)
	\Big(1-\frac{2J_{i'}}{N}+\frac{1}{N^2}\sum_{m=1}^n J_m^2\Big)
	+2\Big(\frac{1}{N^2}\sum_{m=1}^n J_m^2-\frac{J_i+J_{i'}}{N}\Big)^2\\
	&\qquad\qquad
	-3\Bigg[
	\frac{J_i^2}{N^2}\Big(1-\frac{J_i}{N}\Big)^2
	+\frac{J_{i'}^2}{N^2}\Big(1-\frac{J_{i'}}{N}\Big)^2
	+\frac{1}{N^4}\Big(\sum_{m=1}^n J_m^4-J_i^4-J_{i'}^4\Big)
	\Bigg]
	\Bigg)\\
	&\quad
	+\mu_{2u}\mu_{2v}\Bigg[
	\Big(1-\frac{2J_i}{N}+\frac{1}{N^2}\sum_{m=1}^n J_m^2\Big)\Big(\frac{1}{J_{i'}}-\frac{1}{N}\Big)
	+\Big(1-\frac{2J_{i'}}{N}+\frac{1}{N^2}\sum_{m=1}^n J_m^2\Big)\Big(\frac{1}{J_i}-\frac{1}{N}\Big)\\
	&\qquad \qquad \qquad \qquad \qquad \qquad -\frac{4}{N}\Big(\frac{1}{N^2}\sum_{m=1}^n J_m^2-\frac{J_i+J_{i'}}{N}\Big)
	\Bigg]\\
	&\quad
	+\mu_{4v}\Bigg[
	\frac{1}{N^4}\Bigg(\frac{(N-J_i)^2}{J_i}+\frac{(N-J_{i'})^2}{J_{i'}}+(N-J_i-J_{i'})\Bigg)
	\Bigg]\\
	&\quad
	+\mu_{2v}^2\Bigg(
	\Big(\frac{1}{J_i}-\frac{1}{N}\Big)\Big(\frac{1}{J_{i'}}-\frac{1}{N}\Big)
	+\frac{2}{N^2}
	-3\cdot \frac{1}{N^4}\Bigg(\frac{(N-J_i)^2}{J_i}+\frac{(N-J_{i'})^2}{J_{i'}}+(N-J_i-J_{i'})\Bigg)
	\Bigg).
\end{align*}

Summing the fourth-moment identity over $i$ and the cross-product identity over $i\neq i'$ gives a $2\times2$ system.
\[
\binom{\mu_{4u}}{\mu_{2u}^2}
=
\begin{pmatrix}
	a_{11}^{\obs} & a_{12}^{\obs}\\
	a_{21}^{\obs} & a_{22}^{\obs}
\end{pmatrix}^{-1}
\binom{
	\E\!\left[\sum_{i=1}^n (\bar y_i-\bar y^{\obs})^4 - T_4^{\obs}\right]
}{
	\E\!\left[\sum_{i\neq i'} (\bar y_i-\bar y^{\obs})^2(\bar y_{i'}-\bar y^{\obs})^2 - T_{22}^{\obs}\right]
},
\]
where

	\begin{align*}
		a_{11}^{\obs}
		&=\sum_{i=1}^n\Bigg[
		\Big(1-\frac{J_i}{N}\Big)^4+\frac{1}{N^4}\sum_{m\neq i}J_m^4
		\Bigg],\quad
		a_{12}^{\obs}
		=3\sum_{i=1}^n\Bigg(
		\Big[\Big(1-\frac{J_i}{N}\Big)^2+\frac{1}{N^2}\sum_{m\neq i}J_m^2\Big]^2
		-\Big[\Big(1-\frac{J_i}{N}\Big)^4+\frac{1}{N^4}\sum_{m\neq i}J_m^4\Big]
		\Bigg),\\[2pt]
		a_{21}^{\obs}
		&=\sum_{i\neq i'}\Bigg[
		\Big(\frac{J_i}{N}\Big)^2\Big(1-\frac{J_i}{N}\Big)^2
		+\Big(\frac{J_{i'}}{N}\Big)^2\Big(1-\frac{J_{i'}}{N}\Big)^2
		+\frac{1}{N^4}\sum_{m\neq i,i'}J_m^4
		\Bigg],\\[2pt]
		a_{22}^{\obs}
		&=\sum_{i\neq i'}\Bigg(
		\Big[1-\frac{2J_i}{N}+\frac{1}{N^2}\sum_{m=1}^n J_m^2\Big]
		\Big[1-\frac{2J_{i'}}{N}+\frac{1}{N^2}\sum_{m=1}^n J_m^2\Big]
		+2\Big[\frac{1}{N^2}\sum_{m=1}^n J_m^2-\frac{J_i+J_{i'}}{N}\Big]^2\\
		&\qquad\qquad
		-3\Bigg[
		\Big(\frac{J_i}{N}\Big)^2\Big(1-\frac{J_i}{N}\Big)^2
		+\Big(\frac{J_{i'}}{N}\Big)^2\Big(1-\frac{J_{i'}}{N}\Big)^2
		+\frac{1}{N^4}\sum_{m\neq i,i'}J_m^4
		\Bigg]
		\Bigg),\\[6pt]
		T_4^{\obs}
		&=
		6\mu_{2u}\mu_{2v}\sum_{i=1}^n
		\Big(1-\frac{2J_i}{N}+\frac{1}{N^2}\sum_{m=1}^n J_m^2\Big)\Big(\frac{1}{J_i}-\frac{1}{N}\Big)
		+\mu_{4v}\sum_{i=1}^n\Bigg[J_i\Big(\frac{1}{J_i}-\frac{1}{N}\Big)^4+(N-J_i)\Big(\frac{1}{N}\Big)^4\Bigg]\\
		&\quad
		+3\mu_{2v}^2\sum_{i=1}^n\Bigg[
		\Big(\frac{1}{J_i}-\frac{1}{N}\Big)^2
		-\Bigg(J_i\Big(\frac{1}{J_i}-\frac{1}{N}\Big)^4+(N-J_i)\Big(\frac{1}{N}\Big)^4\Bigg)
		\Bigg],\\[4pt]
		T_{22}^{\obs}
		&=
		\mu_{2u}\mu_{2v}\sum_{i\neq i'}
		\Bigg[
		\Big(1-\frac{2J_i}{N}+\frac{1}{N^2}\sum_{m=1}^n J_m^2\Big)\Big(\frac{1}{J_{i'}}-\frac{1}{N}\Big)
		+\Big(1-\frac{2J_{i'}}{N}+\frac{1}{N^2}\sum_{m=1}^n J_m^2\Big)\Big(\frac{1}{J_i}-\frac{1}{N}\Big)\\
		&\qquad\qquad\qquad\qquad
		-\frac{4}{N}\Big(\frac{1}{N^2}\sum_{m=1}^n J_m^2-\frac{J_i+J_{i'}}{N}\Big)
		\Bigg]
		+\mu_{4v}\sum_{i\neq i'}
		\frac{1}{N^4}\Bigg[
		\frac{(N-J_i)^2}{J_i}+\frac{(N-J_{i'})^2}{J_{i'}}+(N-J_i-J_{i'})
		\Bigg]\\
		&\quad
		+\mu_{2v}^2\sum_{i\neq i'}
		\Bigg[
		\Big(\frac{1}{J_i}-\frac{1}{N}\Big)\Big(\frac{1}{J_{i'}}-\frac{1}{N}\Big)
		+\frac{2}{N^2}
		-\frac{3}{N^4}\Bigg(
		\frac{(N-J_i)^2}{J_i}+\frac{(N-J_{i'})^2}{J_{i'}}+(N-J_i-J_{i'})
		\Bigg)
		\Bigg].
	\end{align*}

\section{Proofs for Three-Level Model}  \label{ap:3level}

Using within-$(i,j)$ variation,
\begin{align*}
	y_{ijk}-\bar y_{ij}
	&=w_{ijk}-\bar w_{ij}
	=\Big(1-\frac{1}{K_{ij}}\Big)w_{ijk}-\frac{1}{K_{ij}}\sum_{k'\neq k} w_{ijk'} .
\end{align*}
Applying Lemma~\ref{lma:powers} gives
\begin{align*}
	\E[(y_{ijk}-\bar y_{ij})^2]
	&=\mu_{2w}\Big[\Big(1-\frac{1}{K_{ij}}\Big)^2+(K_{ij}-1)\Big(\frac{1}{K_{ij}}\Big)^2\Big]
	=\mu_{2w}\frac{K_{ij}-1}{K_{ij}},\\
	\E[(y_{ijk}-\bar y_{ij})^3]
	&=\mu_{3w}\Big[\Big(1-\frac{1}{K_{ij}}\Big)^3+(K_{ij}-1)\Big(-\frac{1}{K_{ij}}\Big)^3\Big]
	=\mu_{3w}\frac{(K_{ij}-1)(K_{ij}-2)}{K_{ij}^2}.
\end{align*}

Using within-$i$ variation and group-averages across $j$,
\begin{align*}
	\bar y_{ij}-\bar y_i^{\grp}
	&=[v_{ij}-\bar v_i^{\grp}]+[\bar w_{ij}-\bar w_i^{\grp}]
	=\Big[\Big(1-\frac{1}{J_i}\Big)v_{ij}-\frac{1}{J_i}\sum_{j'\neq j} v_{ij'}\Big]
	+\Big[\Big(1-\frac{1}{J_i}\Big)\bar w_{ij}-\frac{1}{J_i}\sum_{j'\neq j}\bar w_{ij'}\Big],
\end{align*}
Applying Lemma~\ref{lma:powers} and using independence gives
\begin{align*}
	\E[(\bar y_{ij}-\bar y_i^{\grp})^2]
	&=\mu_{2v}\Big[\Big(1-\frac{1}{J_i}\Big)^2+(J_i-1)\Big(\frac{1}{J_i}\Big)^2\Big]
	 +\mu_{2w}\Big[\Big(1-\frac{1}{J_i}\Big)^2\frac{1}{K_{ij}}
	+\Big(\frac{1}{J_i}\Big)^2\sum_{j'\neq j}\frac{1}{K_{ij'}}\Big]\\
	&=\mu_{2v}\frac{J_i-1}{J_i}
	+\mu_{2w}\Big(\frac{1}{K_{ij}}-\frac{2}{J_iK_{ij}}+\frac{1}{J_i^2}\sum_{j'=1}^{J_i}\frac{1}{K_{ij'}}\Big),\\
	\E[(\bar y_{ij}-\bar y_i^{\grp})^3]
	&=\mu_{3v}\Big[\Big(1-\frac{1}{J_i}\Big)^3+(J_i-1)\Big(-\frac{1}{J_i}\Big)^3\Big]
	 +\mu_{3w}\Big[\Big(1-\frac{1}{J_i}\Big)^3\frac{1}{K_{ij}^2}
	+\Big(-\frac{1}{J_i}\Big)^3\sum_{j'\neq j}\frac{1}{K_{ij'}^2}\Big]\\
	&=\mu_{3v}\frac{(J_i-1)(J_i-2)}{J_i^2}
	+\mu_{3w}\Big(\frac{1}{K_{ij}^2}-\frac{3}{J_iK_{ij}^2}+\frac{3}{J_i^2K_{ij}^2}-\frac{1}{J_i^3}\sum_{j'=1}^{J_i}\frac{1}{K_{ij'}^2}\Big).
\end{align*}

Similarly, using within-$i$ variation and observation-averages across $j$,
\begin{align*}
	\bar y_{ij}-\bar y_i^{\obs}
	&=[v_{ij}-\bar v_i^{\obs}]+[\bar w_{ij}-\bar w_i^{\obs}]
	=\Big[\Big(1-\frac{K_{ij}}{K_i}\Big)v_{ij}-\frac{1}{K_i}\sum_{j'\neq j}K_{ij'}v_{ij'}\Big]
	+\Big[\Big(1-\frac{K_{ij}}{K_i}\Big)\bar w_{ij}-\frac{1}{K_i}\sum_{j'\neq j}K_{ij'}\bar w_{ij'}\Big].
\end{align*}
Applying Lemma~\ref{lma:powers} and using independence gives
\begin{align*}
	\E[(\bar y_{ij}-\bar y_i^{\obs})^2]
	&=\mu_{2v}\Big[\Big(1-\frac{K_{ij}}{K_i}\Big)^2+\frac{1}{K_i^2}\sum_{j'\neq j}K_{ij'}^2\Big]
	+\mu_{2w}\Big[\Big(1-\frac{K_{ij}}{K_i}\Big)^2\frac{1}{K_{ij}}
	+\frac{1}{K_i^2}\sum_{j'\neq j}\frac{K_{ij'}^2}{K_{ij'}}\Big]\\
	&=\mu_{2v}\Big(1-\frac{2K_{ij}}{K_i}+\frac{1}{K_i^2}\sum_{j'=1}^{J_i}K_{ij'}^2\Big)
	+\mu_{2w}\Big(\frac{1}{K_{ij}}-\frac{1}{K_i}\Big),\\
	\E[(\bar y_{ij}-\bar y_i^{\obs})^3]
	&=\mu_{3v}\Big[\Big(1-\frac{K_{ij}}{K_i}\Big)^3-\frac{1}{K_i^3}\sum_{j'\neq j}K_{ij'}^3\Big]
	 +\mu_{3w}\Big[\Big(1-\frac{K_{ij}}{K_i}\Big)^3\frac{1}{K_{ij}^2}
	-\frac{1}{K_i^3}\sum_{j'\neq j}\frac{K_{ij'}^3}{K_{ij'}^2}\Big]\\
	&=\mu_{3v}\Big(1-\frac{3K_{ij}}{K_i}+\frac{3K_{ij}^2}{K_i^2}-\frac{1}{K_i^3}\sum_{j'=1}^{J_i}K_{ij'}^3\Big)
	+\mu_{3w}\frac{(K_i-K_{ij})(K_i-2K_{ij})}{K_i^2K_{ij}^2}.
\end{align*}


Using between-$i$ variation and group-averages 
\begin{align*}
	\bar y_i^{\grp}-\bar y^{\grp}
	=[u_i-\bar u^{\grp}]+[\bar v_i^{\grp}-\bar v^{\grp}]+[\bar w_i^{\grp}-\bar w^{\grp}]
	&=\Big[\Big(1-\frac{1}{n}\Big)u_i-\frac{1}{n}\sum_{i'\neq i}u_{i'}\Big]
	+\Bigg[\Big(1-\frac{1}{n}\Big)\frac{1}{J_i}\sum_{j=1}^{J_i} v_{ij}
	-\frac{1}{n}\sum_{i'\neq i}\frac{1}{J_{i'}}\sum_{j=1}^{J_{i'}} v_{i'j}\Bigg]\\
	&\quad+\Bigg[\Big(1-\frac{1}{n}\Big)\frac{1}{J_i}\sum_{j=1}^{J_i}\frac{1}{K_{ij}}\sum_{k=1}^{K_{ij}} w_{ijk}
	-\frac{1}{n}\sum_{i'\neq i}\frac{1}{J_{i'}}\sum_{j=1}^{J_{i'}}\frac{1}{K_{i'j}}\sum_{k=1}^{K_{i'j}} w_{i'jk}\Bigg].
\end{align*}

Applying Lemma~\ref{lma:powers} and using independence gives
\begin{align*}
	\E[(\bar y_i^{\grp}-\bar y^{\grp})^2]
	&=\mu_{2u}\frac{n-1}{n}+\mu_{2v}\Big(\frac{1}{J_i}-\frac{2}{nJ_i}+\frac{1}{n^2}\sum_{i'=1}^n\frac{1}{J_{i'}}\Big)
	 +\mu_{2w}\Bigg(
	\Big(1-\frac{1}{n}\Big)^2\frac{1}{J_i^2}\sum_{j=1}^{J_i}\frac{1}{K_{ij}}
	+\Big(\frac{1}{n}\Big)^2\sum_{i'\neq i}\frac{1}{J_{i'}^2}\sum_{j=1}^{J_{i'}}\frac{1}{K_{i'j}}
	\Bigg),\\
	\E[(\bar y_i^{\grp}-\bar y^{\grp})^3]
	&=\mu_{3u}\frac{(n-1)(n-2)}{n^2}+\mu_{3v}\Big(\frac{1}{J_i^2}-\frac{3}{nJ_i^2}+\frac{3}{n^2J_i^2}-\frac{1}{n^3}\sum_{i'=1}^n\frac{1}{J_{i'}^2}\Big)\\
	&\quad+ \mu_{3w}\Bigg(
	\Big(1-\frac{1}{n}\Big)^3\frac{1}{J_i^3}\sum_{j=1}^{J_i}\frac{1}{K_{ij}^2}
	+\Big(-\frac{1}{n}\Big)^3\sum_{i'\neq i}\frac{1}{J_{i'}^3}\sum_{j=1}^{J_{i'}}\frac{1}{K_{i'j}^2}
	\Bigg).
\end{align*}

Similarly, using between-$i$ variation and observation-averages 
\begin{align*}
	\bar y_i^{\obs}-\bar y^{\obs}
	=[u_i-\bar u^{\obs}]+[\bar v_i^{\obs}-\bar v^{\obs}]+[\bar w_i^{\obs}-\bar w^{\obs}]
	&=\Big[\Big(1-\frac{K_i}{N}\Big)u_i-\frac{1}{N}\sum_{i'\neq i}K_{i'}u_{i'}\Big]
	+\Big[\Big(1-\frac{K_i}{N}\Big)\frac{1}{K_i}\sum_{j=1}^{J_i}K_{ij}v_{ij}
	-\frac{1}{N}\sum_{i'\neq i}\sum_{j=1}^{J_{i'}}K_{i'j}v_{i'j}\Big]\\
	&\quad +\Big[\Big(1-\frac{K_i}{N}\Big)\frac{1}{K_i}\sum_{j=1}^{J_i}\sum_{k=1}^{K_{ij}} w_{ijk}
	-\frac{1}{N}\sum_{i'\neq i}\sum_{j=1}^{J_{i'}}\sum_{k=1}^{K_{i'j}} w_{i'jk}\Big].
\end{align*}

Applying Lemma~\ref{lma:powers} and using independence gives
\begin{align*}
	\E[(\bar y_i^{\obs}-\bar y^{\obs})^2]
	&=\mu_{2u}\Big(1-\frac{2K_i}{N}+\frac{1}{N^2}\sum_{i'=1}^n K_{i'}^2\Big) 
	+\mu_{2v}\Bigg(
	\Big(1-\frac{K_i}{N}\Big)^2\frac{1}{K_i^2}\sum_{j=1}^{J_i}K_{ij}^2
	+\frac{1}{N^2}\sum_{i'\neq i}\sum_{j=1}^{J_{i'}}K_{i'j}^2
	\Bigg) +  \mu_{2w}\Big(\frac{1}{K_i}-\frac{1}{N}\Big),\\
	\E[(\bar y_i^{\obs}-\bar y^{\obs})^3]
	&=\mu_{3u}\Big(1-\frac{3K_i}{N}+\frac{3K_i^2}{N^2}-\frac{1}{N^3}\sum_{i'=1}^n K_{i'}^3\Big)
	+\mu_{3v}\Bigg(
	\Big(1-\frac{K_i}{N}\Big)^3\frac{1}{K_i^3}\sum_{j=1}^{J_i}K_{ij}^3
	+\Big(-\frac{1}{N}\Big)^3\sum_{i'\neq i}\sum_{j=1}^{J_{i'}}K_{i'j}^3
	\Bigg)\\
	&\quad +\mu_{3w}\frac{(N-K_i)(N-2K_i)}{N^2K_i^2}.
\end{align*}

\end{document}